\newtheorem{theorem}{Theorem}
\newtheorem{corollary}[theorem]{Corollary}
\newtheorem{lemma}[theorem]{Lemma}
\newenvironment{proof}[1][Proof]{\noindent\textbf{#1.} }{\newline \hspace*{\textwidth}\hspace*{-0,4cm} \rule{0.5em}{0.5em} \vspace{0,2cm}}
\begin{document}

\title{A wavelet-based tool for studying non-periodicity}
\author{R. Ben\'{\i}tez$^1$, V. J. Bol\'os$^2$, M. E. Ram\'{\i}rez$^3$ \\
%EndAName
\\
{\small $^1$ Departamento de Matem\'aticas,}\\
{\small Centro Universitario de Plasencia, Universidad Extremadura.}\\
{\small Avda. Virgen del Puerto 2, 10600 Plasencia, Spain.}\\
{\small e-mail\textup{: \texttt{rbenitez@unex.es}}} \\
\\
{\small $^2$ Departamento de Matem\'aticas para la Econom\'{\i}a y la Empresa,}\\
{\small Facultad de Econom\'{\i}a, Universidad de Valencia.}\\
{\small Avda. Tarongers s/n, 46071 Valencia, Spain.}\\
{\small e-mail\textup{: \texttt{vicente.bolos@uv.es}}} \\
\\
{\small $^3$ GMV A\& D, Spain.}\\
{\small e-mail\textup{: \texttt{mramirez@gmv.com}}}}
\date{June 2010}
\maketitle

\begin{abstract} This paper presents a new numerical approach to the
  study of non-periodicity in signals, which can complement the
  maximal Lyapunov exponent method for determining chaos transitions of a
  given dynamical system. The proposed technique is based on the
  continuous wavelet transform and the wavelet multiresolution analysis.
  A new parameter, the \textit{scale index}, is introduced and interpreted as
  a measure of the degree of the signal's non-periodicity. This methodology
  is successfully applied to three classical dynamical systems: the
  Bonhoeffer-van der Pol oscillator, the logistic map, and the Henon
  map.
\end{abstract}

\noindent
\textbf{Keywords:}
Non-periodicity; Wavelets; Chaotic dynamical systems
%\PACS 05.45.-a; 05.45.Tp
%\MSC 37M10 \sep 65P20 \sep 65T60

\maketitle

\section{Introduction}
\label{introduction}

In the study of chaotic dynamical systems it is quite common to have bifurcation diagrams that represent, for each value of one
or more parameters, the number of periodic orbits of the system. The
determination of the parameter values for which the system becomes
chaotic is a classical problem within the theory of dynamical systems
\cite{Guck83}.

Although there is no universally accepted definition of chaos,
usually, a bounded signal is considered chaotic if (see
\cite{Strogatz94})

\begin{itemize}
\item[](a) it shows sensitive dependence on the initial conditions,
  and
\item[](b1) it is non-periodic, or
 \item[](b2) it does not
\textit{converge} to a periodic orbit.
\end{itemize}

Usually, chaos transitions in bifurcation diagrams are numerically
detected by means of the Maximal Lyapunov Exponent (MLE). Roughly
speaking, Lyapunov exponents characterize the rate of separation of
initially nearby orbits and a system is thus considered chaotic if the
MLE is positive. Therefore, the MLE technique is one that focuses on the
sensitivity to initial conditions, in other words, on criterion (a).

As to criteria (b1) and (b2), Fourier analysis can be used in order to
study non-periodicity. However chaotic signals may be
highly non-stationary, which makes wavelets more suitable
\cite{Chandre03}.  Moreover, compactly supported wavelets are a useful
tool in the analysis of non-periodicity in compactly supported
signals, as we will see in Corollary \ref{result}.

Wavelet theory is a quite recent area of mathematical research that
has been applied to a wide range of physical and engineering problems
(see, for instance \cite{Donald00,Tony05} for classical applications
to image processing and to time series analysis and
\cite{Arneodo98,Polig03,Panigr06} for examples of non-standard
applications to DNA sequences, astronomy and climatology). In
particular, the wavelet decomposition of a signal has been proved to
be a very useful tool in the study of chaotic systems. Indeed,
wavelets have been successfully used in the analysis of the chaotic
regimes of the Duffing oscillator, focusing on the detection of
periodicities within chaotic signals, and on chaos numerical control
(see \cite{Per92,Lak06}).

In this paper we present a method for studying non-periodicity that can
complement the MLE for determining chaos transitions of a given dynamical
system which can be either discrete or continuous. This method is
based on the Continuous Wavelet Transform (CWT) and the wavelet
Multiresolution Analysis (MRA) of a signal.  In particular, we compute
the ratio of the scalogram value at the dominant scale (i.e.  the
scale where the maximum is reached) to the value at the least
significant scale (i.e. the scale where the scalogram takes its
minimum value after that maximum is reached). This quotient determines
the \textit{scale index} which is strictly positive when the signal is
non-periodic, and can be interpreted as a measure of the degree of
non-periodicity.

The paper is organized as follows: Section \ref{sec:2} is devoted to
the establishing of the background and the main results of the
wavelet theory used later in the paper. In Section \ref{secindex} we
give an overview of the scalogram and define the scale index. Finally,
Section \ref{sec:4} illustrates the value of the method by
showing the successfull detection of the chaos transitions for three classical
systems: the Bonhoeffer-van der Pol (BvP) oscillator, the
logistic map and the Henon map.

\section{Wavelet analysis of time series}
\label{sec:2}

In this section we introduce the wavelet tools and results needed for
defining the \textit{scale index}.

\subsection{Continuous wavelet transform and scalogram}

Wavelet theory is based on the existence of two special functions
$\phi $ and $\psi $, known as the \textit{scaling} and \textit{wavelet
  functions} respectively \cite{mallat1999}.

A \textit{wavelet function} (or wavelet, for short) is a function
$\psi \in L^2\left( \mathbb{R}\right) $ with zero average (i.e.  $\int
_{\mathbb{R}} \psi =0$), with $\| \psi \| =1$, and \textit{centered}
in the neighborhood of $t=0$ \cite{mallat1999}. Moreover, we are going
to demand that $t\psi \left( t\right) \in L^1\left( \mathbb{R}\right)
$ in order to ensure that the continuous wavelet transform (\ref{eq:cwt}) is invertible
in some way.

Given a wavelet $\psi $, its dilated and translated dyadic version is given by
\begin{equation}
\label{eq:psijk}
\psi _{j,k}(t):=\frac{1}{\sqrt{2^j}}\psi \left( \frac{t-2^jk}{2^j}\right) ,
\end{equation}
where $j,k\in \mathbb{Z}$.  It is important
to construct wavelets such that the family of dyadic wavelets
$\{\psi_{j,k}\} _{j,k\in \mathbb{Z}}$ is an orthonormal basis of
$L^2\left( \mathbb{R}\right)$. These orthonormal bases are related to the
Multiresolution Analysis (MRA) of signals.
%Thus, any function $f\in L^2\left( \mathbb{R}\right)$ can be written as
%$f=\sum _{j,k\in \mathbb{Z}}d_{j,k}\psi _{j,k}$,
%where $d_{j,k}:=\left< f,\psi _{j,k}\right> $ are called \textit{wavelet} or \textit{detail coefficients}. The \textit{$2^j$-detail of $f$} is defined by
%$d_j:=\sum _{k\in \mathbb{Z}}d_{j,k}\psi _{j,k}$,
%where $j\in \mathbb{Z}$.

Scaling $\psi$ by a positive quantity $s$, and
translating it by $u\in\mathbb{R}$, we define a family of
\textit{time--frequency atoms}, $\psi_{u,s}$, as follows:
\begin{equation}
\label{eq:psius}
\psi _{u,s}(t):=\frac{1}{\sqrt{s}}\psi \left(
\frac{t-u}{s}\right) ,\qquad u\in \mathbb{R},\,\, s>0.
\end{equation}
Note that there is an abuse of notation in
expressions (\ref{eq:psius}) and (\ref{eq:psijk}).

Given $f\in L^2\left( \mathbb{R}\right) $, the \textit{continuous
wavelet transform} (CWT) of $f$ at time $u$ and scale $s$ is defined
as
\begin{equation}
Wf\left( u,s\right) :=\left<f,\psi _{u,s}\right> =\int
_{-\infty}^{+\infty}f(t)\psi ^* _{u,s}(t)\textrm{d}t,
\label{eq:cwt}
\end{equation}
and it provides the frequency component (or {\it details}) of $f$
corresponding to the scale $s$ and time location $t$.

The wavelet transform given in
(\ref{eq:cwt}), provides a time--frequency decomposition of $f$ in the
so called \textit{time--frequency plane} (see Figure
\ref{scaleoeoeogram_A076_A116}).

\begin{figure*}[htbp]
%\centering
  \includegraphics[width=0.95\textwidth]{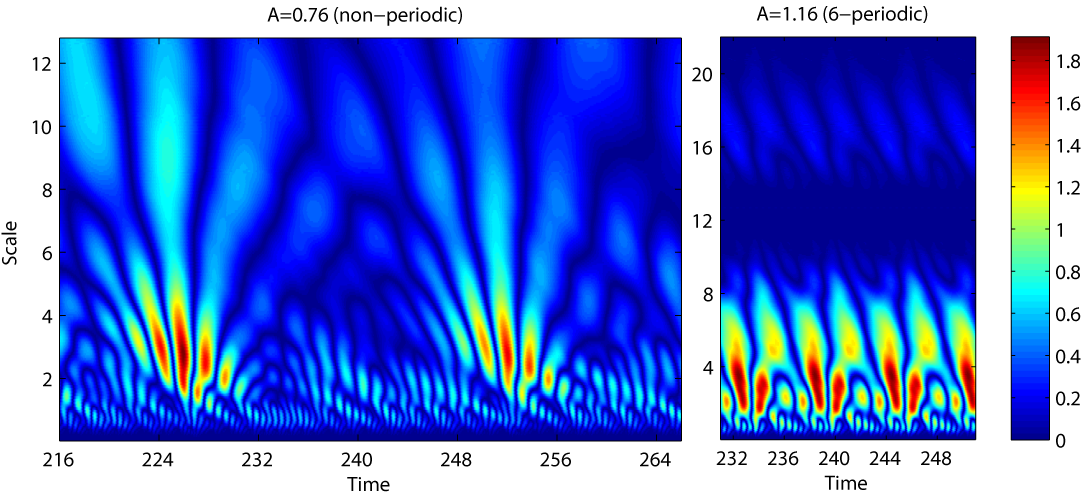}
  \caption{Time--frequency plane decomposition corresponding to the BvP
    solution with $A=0.76$ (left) and $A=1.16$ (right) using
    Daubechies (eight--wavelet and four--wavelet respectively) wavelet functions (see Section
    \ref{sec:bvp} for the definition of the BvP system). Each point in
    this 2D representation corresponds to the modulus of the wavelet
    coefficients of the CWT. Note that the wavelet coefficients of the
    CWT with $A=1.16$ vanish at scale 12 (i.e. twice its period) at
    any time, as we will prove in Theorem \ref{teo:1}.}
\label{scaleoeoeogram_A076_A116}
\end{figure*}

The \textit{scalogram} of $f$, $\mathcal{S}$, is defined as follows:
\begin{equation*}
  \mathcal{S}\left( s\right) :=\| Wf\left( u,s\right) \| =\left(
    \int _{-\infty}^{+\infty} | Wf\left( u,s\right)
    |^2\textrm{d}u\right) ^{\frac{1}{2}}.
%\label{eq:scalogram}
\end{equation*}
$\mathcal{S}\left( s\right)$ is the \textit{energy} of the continuous
wavelet transform of $f$ at  scale $s$. Obviously, $\mathcal{S}(s)\geq
0$ for all scale $s$, and if $\mathcal{S}(s)>0$ we will say that the
signal $f$ has details at scale $s$. Thus, the scalogram is a useful
tool for studying a signal, since it allows the detection of its most
representative scales (or frequencies), that is, the scales that
mostly contribute to the total energy of the signal.

\subsection{Analysis of compactly supported discrete signals}

In practice, to make a signal $f$ suitable for a numerical study, we
have to
\begin{itemize}
\item[(i)] consider that it is defined over a finite time interval
  $I=\left[ a,b\right] $, and
\item[(ii)] sample it to get a discrete set of data.
\end{itemize}

Regarding the first point, boundary problems arise if the support of
$\psi _{u,s}$ overlaps $t=a$ or $t=b$. There are several methods for
avoiding these problems, like using \textit{periodic wavelets},
\textit{folded wavelets} or \textit{boundary wavelets} (see
\cite{mallat1999}); however, these methods either produce large
amplitude coefficients at the boundary or complicate the
calculations. So, if the wavelet function $\psi$ is compactly
supported and the interval $I$ is big enough, the simplest solution is
to study only those wavelet coefficients that are not affected by
boundary effects.

Taking into account the considerations mentioned above, the
\textit{inner scalogram} of $f$ at a scale $s$ is defined by
\begin{equation*}
  \mathcal{S}^{\textrm{inner}}\left( s\right) :=\| Wf\left( s,u\right) \| _{J(s)}
  =\left( \int _{c(s)}^{d(s)} | Wf\left( s,u\right) |^2
    \textrm{d}u\right) ^{\frac{1}{2}},
\end{equation*}
where $J(s)=\left[ c(s),d(s) \right]\subseteq I $ is the maximal
subinterval in $I$ for which the support of $\psi _{u,s}$ is included
in $I$ for all $u\in J(s)$. Obviously, the length of $I$ must be big
enough for $J(s)$ not to be empty or too small, i.e.  $b-a\gg sl$,
where $l$ is the length of the support of $\psi $.

Since the length of $J(s)$ depends on the scale $s$, the values of the
inner scalogram at different scales cannot be compared. To avoid this
problem, we can \textit{normalize} the inner scalogram:
\begin{equation*}
  \overline{\mathcal{S}}^{\textrm{inner}}\left( s\right)
  =\frac{\mathcal{S}^{\textrm{inner}}\left( s\right) }{(d(s)-c(s))^{\frac{1}{2}}}.
\end{equation*}

With respect to the sampling of the signal, any discrete signal can be
analyzed in a \textit{continuous way} using a piecewise constant
interpolation. In this way, the CWT provides a scalogram with a better
resolution than the Discrete Wavelet Transform (DWT), that considers
dyadic levels instead of continuous scales (see \cite{mallat1999}).

\section{The scale index}
\label{secindex}

%As was stated in Section \ref{introduction}, the aim of this paper is
%to study the non-periodicity of a signal from its wavelet
%decomposition.

In this section we introduce a new parameter, the \textit{scale
  index}, that will give us information about the degree of
non-periodicity of a signal. To this end we will first state some
results for the wavelet analysis of periodic functions (for further
reading please refer to \cite{mallat1999} and references therein).

If $f:\mathbb{R}\rightarrow \mathbb{C}$ is a $T$-periodic function in
$L^2\left( \left[ 0,T\right] \right) $, and $\psi $ is a compactly
supported wavelet, then $Wf\left( u,s\right) $ is well-defined for
$u\in \mathbb{R}$ and $s\in \mathbb{R}^+$, although $f$ is not in
$L^2\left( \mathbb{R}\right) $.

The next theorem gives us a criterion for distinguishing between periodic and
non-periodic signals. It ensures that if a signal $f$ has details at
every scale (i.e. the scalogram of $f$ does not vanish at any scale),
then it is non-periodic.

\begin{theorem}
\label{teo:1}
Let $f:\mathbb{R}\rightarrow \mathbb{C}$ be a $T$-periodic function in
$L^2\left( \left[ 0,T\right] \right) $, and let $\psi $ be a compactly
supported wavelet. Then $Wf\left( u,2T\right) =0$ for all $u\in
\mathbb{R}$.
\end{theorem}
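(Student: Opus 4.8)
The plan is to expand the periodic signal in its Fourier series and exploit the linearity of the wavelet transform, reducing the vanishing of $Wf(u,2T)$ to the vanishing of $\widehat{\psi}$ at the discrete frequencies $4\pi n$. Writing $f(t)=\sum_{n\in\mathbb{Z}}c_n e^{2\pi i nt/T}$, convergent in $L^2([0,T])$ and hence in $L^2$ on every bounded interval by periodicity, I would first record that, although $f\notin L^2(\mathbb{R})$, the quantity $Wf(u,2T)$ is a genuine integral over the compact set $K=\operatorname{supp}\psi_{u,2T}$, on which $f\in L^2(K)$ and $\psi_{u,2T}^*\in L^2(K)$. This is precisely where the hypotheses that $\psi$ is compactly supported and that $f\in L^2([0,T])$ are used, and it is what makes the transform well-defined in the first place.

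Next I would compute the transform of a single Fourier mode. A change of variables $w=(t-u)/s$ in $\int_{\mathbb{R}}e^{2\pi i nt/T}\psi_{u,s}^*(t)\,\mathrm{d}t$ gives $\sqrt{s}\,e^{2\pi i nu/T}\,\overline{\widehat{\psi}(2\pi ns/T)}$, where $\widehat{\psi}(\xi)=\int_{\mathbb{R}}\psi(w)e^{-i\xi w}\,\mathrm{d}w$. Setting $s=2T$ collapses the frequency to $2\pi ns/T=4\pi n$, so each mode contributes a multiple of $\overline{\widehat{\psi}(4\pi n)}$. Since the pairing over the compact set $K$ is continuous, the Fourier series may be integrated term by term, yielding
\[
Wf(u,2T)=\sqrt{2T}\sum_{n\in\mathbb{Z}}c_n\,e^{2\pi i nu/T}\,\overline{\widehat{\psi}(4\pi n)}.
\]
It therefore suffices to prove that $\widehat{\psi}(4\pi n)=0$ for every $n\in\mathbb{Z}$.

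The case $n=0$ is immediate, since $\widehat{\psi}(0)=\int_{\mathbb{R}}\psi=0$ is exactly the zero-average condition. The remaining cases $n\neq 0$ are the heart of the matter, and here the zero-average hypothesis alone is not enough: one must invoke the property, imposed in Section~\ref{sec:2}, that $\psi$ generates the orthonormal basis $\{\psi_{j,k}\}$, i.e. that it is the wavelet of a multiresolution analysis with scaling function $\phi$. The two-scale relation then factors $\widehat{\psi}(\omega)=m_1(\omega/2)\,\widehat{\phi}(\omega/2)$ with $m_1$ a $2\pi$-periodic factor, and the standard identity $\widehat{\phi}(2\pi k)=\delta_{k,0}$ (orthonormality of the integer translates of $\phi$, via Poisson summation) gives $\widehat{\psi}(4\pi n)=m_1(2\pi n)\,\widehat{\phi}(2\pi n)=0$ for $n\neq 0$. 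I expect this to be the main obstacle: it is the only place where the orthonormal/MRA structure is genuinely required, and indeed one can construct compactly supported, zero-mean functions with $\widehat{\psi}(4\pi)\neq 0$ for which the conclusion fails, so this hypothesis cannot be dropped. Substituting $\widehat{\psi}(4\pi n)=0$ into the displayed sum makes every term vanish, giving $Wf(u,2T)=0$ for all $u\in\mathbb{R}$, as claimed.
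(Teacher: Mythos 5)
Your proof is correct, but it takes a genuinely different route from the one in Appendix \ref{sec:app}. The paper periodizes the \emph{wavelet}: using the duality $\langle h^{\mathrm{per}},g\rangle=\langle h,g^{\mathrm{per}}\rangle_{[0,1]}$ it reduces the theorem to showing $\psi^{\mathrm{per}}_{u,2}=0$ a.e.\ (Lemmas \ref{lema1} and \ref{lema2}), which it gets by observing that $\psi^{\mathrm{per}}_{1,0}$ is orthogonal to the periodized wavelet basis $\{\psi^{\mathrm{per}}_{j,k}\}_{j\leq 0}\cup\{1\}$ of $L^2([0,1])$, citing Mallat. You instead Fourier-expand the \emph{signal} and reduce the theorem to $\widehat{\psi}(4\pi n)=0$ for all $n\in\mathbb{Z}$, which you obtain from the two-scale relation and $\widehat{\phi}(2\pi k)=\delta_{k,0}$. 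The two arguments are essentially dual: by Poisson summation the $n$-th Fourier coefficient of $\psi^{\mathrm{per}}_{0,2}$ on $[0,1]$ is precisely $\tfrac{1}{\sqrt{2}}\,\widehat{\psi}(4\pi n)$, so your frequency-domain claim is literally equivalent to the paper's Lemma \ref{lema1}. What the paper's route buys is brevity and the avoidance of any discussion of term-by-term integration of the Fourier series (which you do justify correctly via $L^2$ convergence on the compact support of $\psi_{u,2T}$). What your route buys is transparency: it isolates exactly which hypothesis does the work at each frequency --- zero average handles only $n=0$, and the MRA/orthonormality structure is indispensable for $n\neq 0$ --- and your observation that the theorem fails for a generic compactly supported zero-mean $\psi$ with $\widehat{\psi}(4\pi)\neq 0$ is a point the paper leaves implicit, since its statement of Theorem \ref{teo:1} says only ``compactly supported wavelet'' while its proof silently invokes the orthonormal-basis property through the citation to Mallat. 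Your computation also immediately yields the paper's later remark that the scalogram of a $T$-periodic signal vanishes at every scale $2kT$, since $\widehat{\psi}(4\pi nk)=0$ as well.
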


For a detailed proof see Appendix \ref{sec:app}. From this result we
obtain the following corollary.

\begin{corollary}
\label{result}
  Let $f:I=\left[ a,b\right] \rightarrow \mathbb{C}$ a $T$-periodic
  function in $L^2\left( \left[ a,a+T\right] \right) $. If $\psi $ is
  a compactly supported wavelet, then the (normalized) inner scalogram
  of $f$ at scale $2T$ is zero.
\end{corollary}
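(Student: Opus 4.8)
The plan is to reduce the statement to Theorem \ref{teo:1} by replacing $f$ with its periodic extension to all of $\mathbb{R}$. First I would define $\tilde{f}:\mathbb{R}\rightarrow\mathbb{C}$ to be the $T$-periodic function that agrees with $f$ on $[a,a+T]$. Since $f\in L^2\left(\left[a,a+T\right]\right)$ and $\tilde f$ is $T$-periodic, the hypotheses of Theorem \ref{teo:1} are met; moreover $\tilde f$ coincides with $f$ on the whole of $I=[a,b]$, because $f$ is itself $T$-periodic on $I$ and thus already equals the periodic repetition of its values on one fundamental domain.

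The heart of the argument is a localization observation. For the scale $s=2T$ and a translation $u\in J(2T)=\left[c(2T),d(2T)\right]$, the defining property of $J(s)$ is that the support of the atom $\psi_{u,2T}$ is contained in $I$. Hence in the integral $Wf\left(u,2T\right)=\int_{-\infty}^{+\infty} f(t)\psi^*_{u,2T}(t)\,\mathrm{d}t$ the integrand is supported inside $I$, where $f$ and $\tilde f$ agree; therefore $Wf\left(u,2T\right)=W\tilde f\left(u,2T\right)$ for every $u\in J(2T)$. This is the step in which compact support of $\psi$ is essential, and it is the only place that requires genuine care.

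Next I would apply Theorem \ref{teo:1} to $\tilde f$, obtaining $W\tilde f\left(u,2T\right)=0$ for all $u\in\mathbb{R}$, and in particular for all $u\in J(2T)$. Substituting into the definition of the inner scalogram gives $\mathcal{S}^{\textrm{inner}}\left(2T\right)=\left(\int_{c(2T)}^{d(2T)}\left|Wf\left(u,2T\right)\right|^2\,\mathrm{d}u\right)^{1/2}=0$. Dividing by the strictly positive factor $(d(2T)-c(2T))^{1/2}$ (legitimate whenever $J(2T)$ is nondegenerate, which is assumed so that the inner scalogram is defined) leaves the normalized inner scalogram $\overline{\mathcal{S}}^{\textrm{inner}}\left(2T\right)$ equal to zero as well, which is exactly the claim.

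I expect the only real obstacle to be the bookkeeping around the periodic extension: one must verify that $\tilde f$ is well defined and that it genuinely coincides with $f$ on all of $I$, rather than merely on a single period, so that the localized integrals truly match. Once that is in place, the conclusion follows formally from Theorem \ref{teo:1}, and the normalization step is immediate.
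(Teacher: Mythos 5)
Your argument is correct and is precisely the reduction the paper intends (the paper states the corollary as an immediate consequence of Theorem \ref{teo:1} without writing out the details): extend $f$ periodically to $\mathbb{R}$, use the compact support of $\psi_{u,2T}$ for $u\in J(2T)$ to identify $Wf(u,2T)$ with the transform of the extension, and apply the theorem. The localization step you flag as the essential one is indeed the whole content of the deduction, and your handling of it is sound.
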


\begin{figure*}[htbp]
\centering
\includegraphics[width=0.95\textwidth]{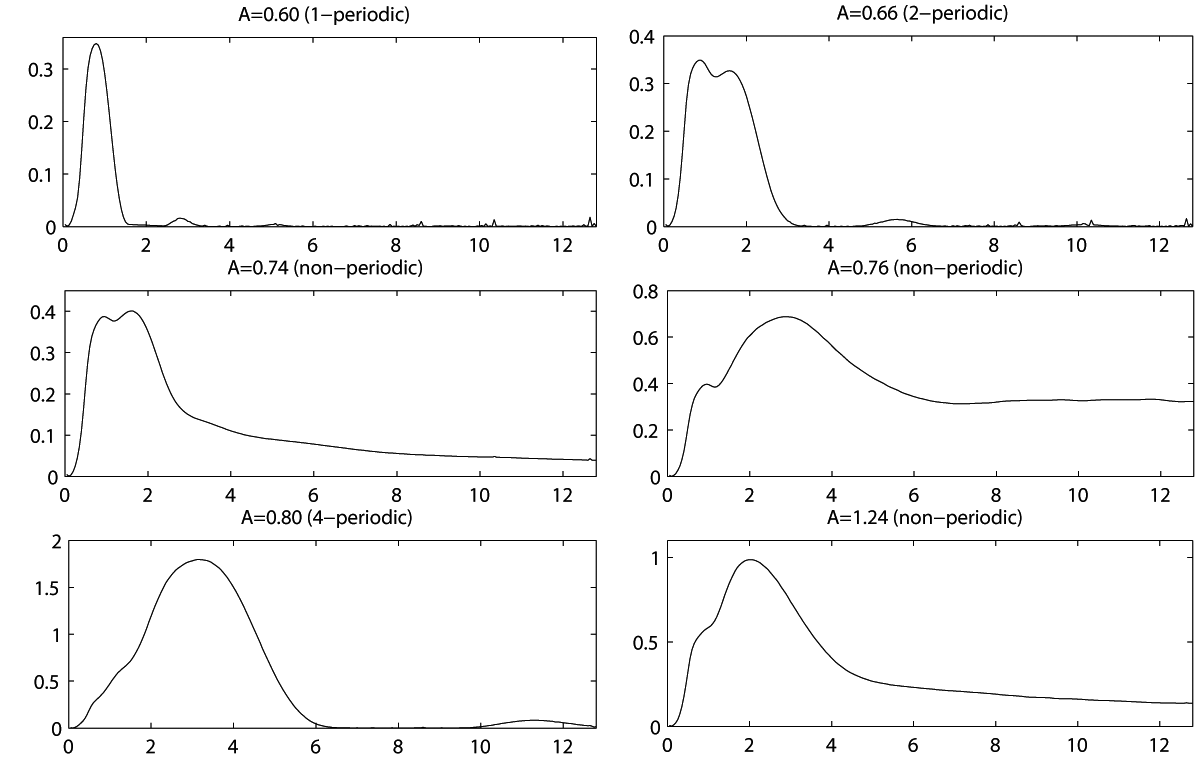}
\caption{Normalized inner scalograms for certain solutions of the BvP
  system (Section \ref{sec:bvp}), from $t=20$ to $t=400$ ($\Delta
  t=0.05$), for different values of $A$, the scale parameter $s$
  running from $s_0=0.05$ to $s_1=12.8$, with $\Delta s=0.05$, and
  using Daubechies eight--wavelet function. It is observed how the
  scalogram of $T$-periodic signals vanishes at $s=2T$.}
\label{bvp_scalogs}
\end{figure*}

These results constitute a valuable tool for detecting periodic and
non-periodic signals, because a signal with details at every scale
must be non-periodic (see Figure \ref{bvp_scalogs}). Note that in order to
detect numerically wether a signal \textit{tends to be periodic}, we have
to analyze its scalogram throughout a relatively wide time range.

Moreover, since the scalogram of a $T$-periodic signal vanishes at all
$2kT$ scales (for all $k\in \mathbb{N}$), it is sufficient to analyze only scales greater than a
fundamental scale $s_0$. Thus, a signal which has details at an
arbitrarily large scale is non-periodic.

In practice, we shall only study the scalogram on a finite interval
$[s_0,s_1]$. The most representative scale of a signal $f$ will be the
scale $s_{\textrm{max}}$ for which the scalogram reaches its maximum
value. If the scalogram $\mathcal{S}(s)$ never becomes too small
compared to $\mathcal{S}(s_{\textrm{max}})$ for $s>s_{\textrm{max}}$, then
the signal is ``numerically non-periodic'' in $[s_0,s_1]$.

Taking into account these considerations, we will define the
\textit{scale index} of $f$ in the scale interval $[s_0,s_1]$ as the
quotient
\[
i_{\textrm{scale}}:=\frac{\mathcal{S}(s_{\textrm{min}})}{\mathcal{S}(s_{\textrm{max}})},
\]
where $s_{\textrm{max}}$ is the smallest scale such that
  $\mathcal{S}(s)\leq \mathcal{S}(s_{\textrm{max}})$ for all $s\in
  [s_0,s_1]$, and $s_{\textrm{min}}$ the smallest scale such that
  $\mathcal{S}(s_{\textrm{min}})\leq \mathcal{S}(s)$ for all $s\in
  [s_{max},s_1]$. Note that for compactly supported signals only the
normalized inner scalogram will be considered.

From its definition, the scale index $i_{\textrm{scale}}$ is such that
$0\leq i_{\textrm{scale}}\leq 1$ and it can be interpreted as a
measure of the degree of non-periodicity of the signal: the scale
index will be zero (or numerically close to zero) for periodic signals
and close to one for highly non-periodic signals.

The selection of the scale interval $[s_0,s_1]$ is an important issue
in the scalogram analysis. Since the non-periodic character of a
signal is given by its behavior at large scales, there is no need for
$s_0$ to be very small. In general, we can choose $s_0$ such that
$s_{\textrm{max}}=s_0+\epsilon$ where $\epsilon$ is positive and close
to zero.

\begin{figure*}[htbp]
\centering
 \includegraphics[width=0.95\textwidth]{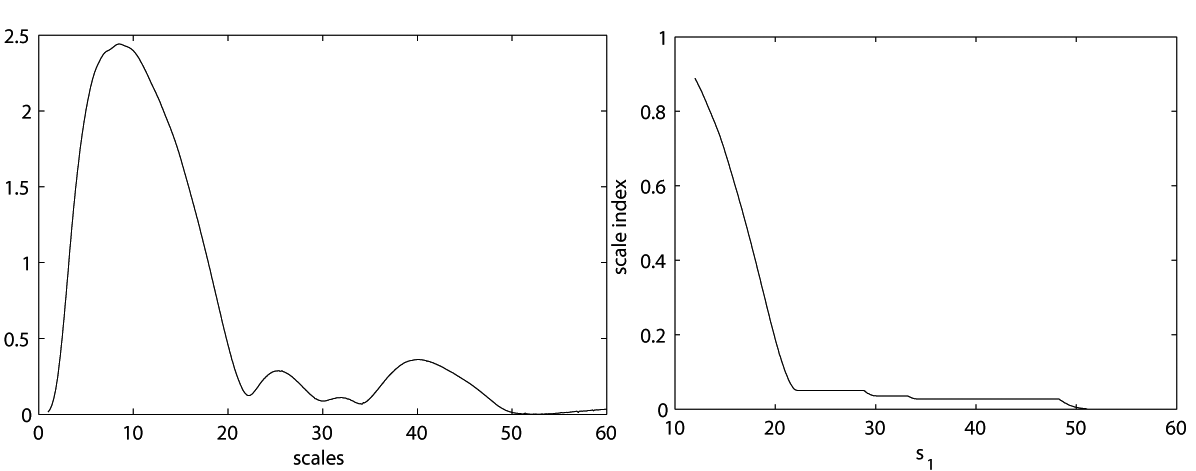}
 \caption{Left: normalized inner scalogram of the almost-periodic
   function $\textrm{sin}\left( t\right)+\textrm{sin}\left(
     t/\sqrt{2}\right) +\textrm{sin}\left( t/\sqrt{5}\right) $, from
   $t=0$ to $t=800$ ($\Delta t=0.1$), using the Daubechies four--wavelet
   function. Right: the scale index tends to zero as we increase the scale
   $s_1$.}
\label{fig:almostper}
\end{figure*}

On the other hand, $s_1$ should be large enough for detecting
periodicities. For example, if we have an almost-periodic function $f$
defined on $\mathbb{R}$, then for any given
$\epsilon>0$ there exists an \textit{almost-period} $T\left(
  \epsilon\right) $ such that
\[
\left| f\left( t+T\right) -f\left( t\right) \right| <\epsilon
\]
for all $t\in \mathbb{R}$ (see \cite{almostper}). Hence, if we choose $\epsilon $ numerically
close to zero, there is a large enough value of $T$ for which the function
is ``numerically $T$-periodic'', and the scale index will be close to
zero if $s_1$ is greater than $2T$. So, for an almost-periodic
function, the scale index tends to zero as we increase $s_1$ (see
Figure \ref{fig:almostper}). But as $s_1$ increases, so does the
computational cost. In fact, the larger $s_1$ is, the wider the time
span should be where the signal is analyzed, in order to maintain the
accuracy of the normalized inner scalogram.

Scales $s_{\textrm{min}}$ and $s_{\textrm{max}}$ determine the
pattern that the scalogram follows (see Figure \ref{bvp_smax_smin_z1}).
For example, in non-periodic signals  $s_{\textrm{min}}$ can be
regarded as the ``least non-periodic scale''. Moreover,
if $s_{\textrm{min}}\simeq s_1$, then the
scalogram decreases at large scales and $s_1$ should be increased
in order to distinguish between a non-periodic signal and a periodic
signal with a very large period.

\begin{figure*}[htbp]
\centering
 \includegraphics[width=0.85\textwidth]{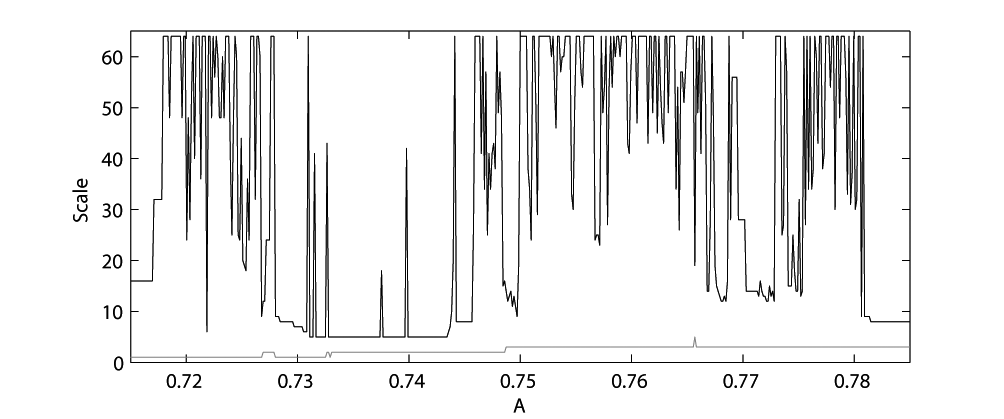}
 \caption{Values of $s_{\textrm{max}}$ (grey) and $s_{\textrm{min}}$
   (black) for orbits of the BvP oscillator.}
\label{bvp_smax_smin_z1}
\end{figure*}

\section{Examples}
\label{sec:4}

In this section we illustrate how the scale index $i_{\textrm{scale}}$
is used in order to detect and study non-periodic orbits of three classical
dynamical systems: the logistic map, the Henon map and the forced
Bonhoeffer-van der Pol oscillator.

The reason for choosing these dynamical systems as examples for
testing the validity of the scale index is mainly that they are
three well known chaotic dynamical systems, arising from different
research areas, and are mathematically very different. These systems
present typical bifurcation diagrams with chaotic and non-chaotic
regions. In order to show the effectiveness of the index
$i_{\textrm{scale}}$, we compare the bifurcation diagram,
the MLE, and $i_{\textrm{scale}}$. It will be shown that there is a
correspondence between the chaotic regions of the bifurcation diagram,
the regions where the MLE is positive, and the regions where
$i_{\textrm{scale}}$ is positive.

Figures \ref{bvp_bif_MLE_cimm} and \ref{logistic_henon_bif_MLE_cimm}
depict the comparison between the three methods mentioned above. The
signals were studied from $t_0 = 20$ to identify not only periodic
signals, but also signals that converge to a periodic one. For the
computation of the MLE, 1500 iterations were used. Integer scales
between $s_0 = 1$ and $s_1 = 64$ were considered in the computation of
$i_{\textrm{scale}}$. A scale was considered to have no details if the
scalogram at that scale takes a value below $\epsilon = 10^{-4}$.

\subsection{Continuous Dynamical System: The Bonhoeffer-van der Pol
  Oscillator}
\label{sec:bvp}

The Bonhoeffer-van der Pol oscillator (BvP) is the non-autonomous
planar system
\begin{equation*}
%\label{eq:bvdp}
\left. \begin{array}{rcl}
x'&=&x-\displaystyle{\frac{x^3}{3}}-y+I(t) \\
y'&=&c(x+a-by)\\
\end{array} \right\} ,
\end{equation*}
being $a$, $b$, $c$ real parameters, and $I\left( t\right) $ an
external force. We shall consider a periodic force $I(t)=A\cos
\left( 2\pi t\right) $ and the specific values for the parameters
$a=0.7$, $b=0.8$, $c=0.1$. These values were considered in
\cite{Raja96} because of their physical and biological importance (see
\cite{Scot77}).

The classical analysis of the BvP system is focused on its Poincaré
map, defined by the flow of the system on $t=1$ (see \cite{Guck83}).
Plotting the first coordinate of the periodic fixed points of the
Poincaré map versus the parameter $A$ (amplitude of the external
force), a bifurcation diagram is obtained \cite{Wang89}.

Such diagrams present chaotic and non-chaotic zones. From a geometric
point of view, chaos transitions are related to homoclinic orbits
(creation or destruction of Smale horseshoes) between the invariant
manifolds of a saddle fixed point of the Poincar\'e map (see
\cite{Guck83}). Such a relationship is thoroughly described in a recent
work \cite{BenBol08}.

Figure \ref{bvp_bif_MLE_cimm} includes the analysis of the BvP
system. The parameter range has been split in two regions, $0.7\leq
A\leq 0.8$ and $1\leq A\leq 1.3$, which are the regions were chaotic
orbits are found.  Note the high level of agreement between the MLE and the
$i_{\textrm{scale}}$ index: the values of $A$ for which the MLE is
negative are also the values for which $i_{\textrm{scale}} \approx 0$.

\begin{figure*}[htbp]
\centering
\includegraphics[width=0.95\textwidth]{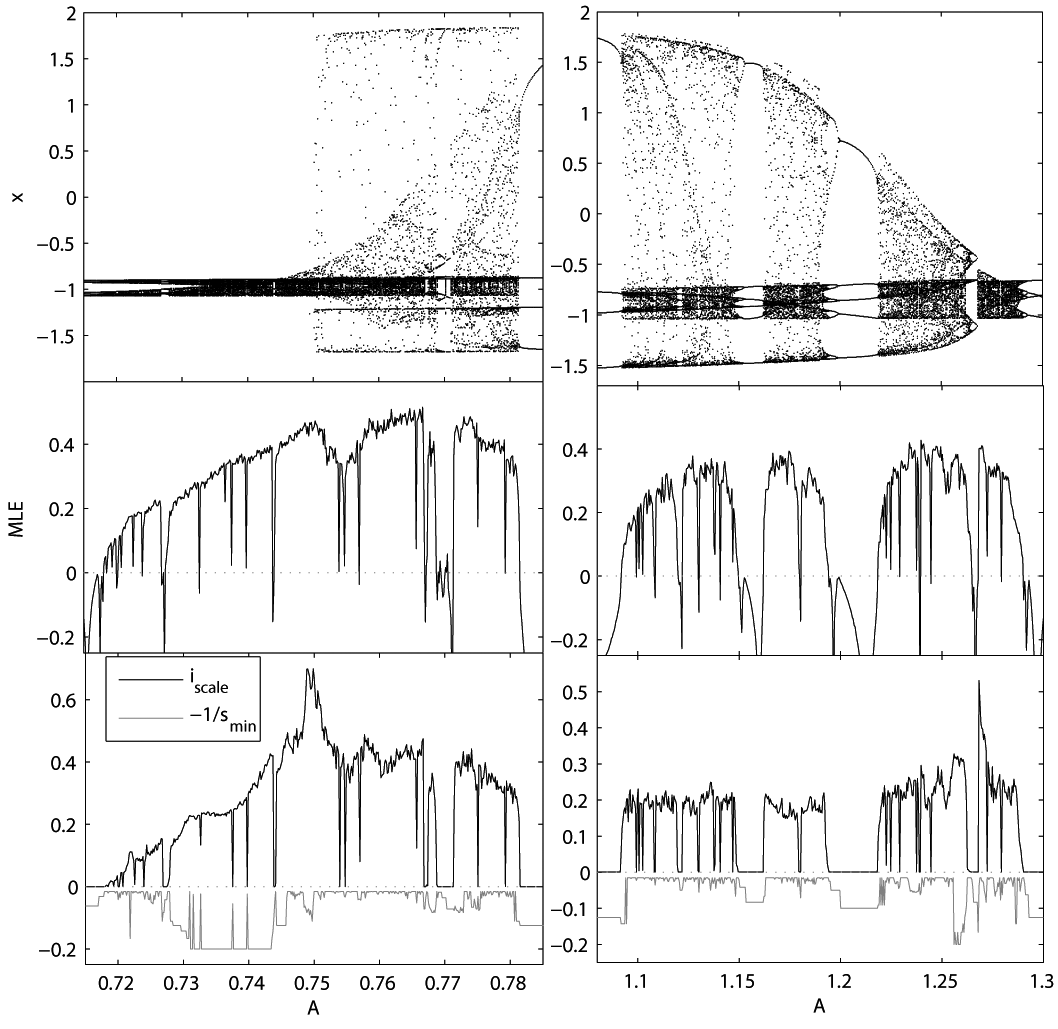}
\caption{Comparison between the bifurcation diagram, MLE and
  $i_{\textrm{scale}}$ (from top to bottom) for the
  BvP oscillator.}
\label{bvp_bif_MLE_cimm}
\end{figure*}

Also remarkable is the coincidence between a relative maximum in
the $i_{\textrm{scale}}$ and the well known ``sudden'' expansion of
the size of the attractor at $A\approx 0.748$ (see \cite{BenBol08},
\cite{Raja96}). Moreover, another relative maximum fits with a
``sudden'' contraction at $A\approx 1.27$.

\subsection{Discrete Dynamical Systems: The Logistic and the Henon
  maps}

The logistic map is the discrete dynamical system given by the
difference equation
\[
x_{t+1}=A x_t(1-x_t).
\]
This well known dynamical system, arising from population dynamics
theory, is a classical example of a simple polynomial map whose orbits
exhibit chaotic behavior for some values of the real parameter $A$.

On the other hand, the Henon map is the two-dimensional dynamical
system given by the quadratic map
\[
\left\{ \begin{array}{rcl}
x_{t+1} &=& 1-A x^2_t+y_t \\
y_{t+1} &=& b x_t \\
\end{array} \right.,
\]
with $A$ and $b$ real parameters. For the values $A = 1.4$ and $b =
0.3$, giving what is often called the \textit{canonical Henon map}, a strange
attractor is present. Fixing the value $b = 0.3$ and varying the
parameter $A$, the map may be chaotic or not. Through the
representation of the $x$-coordinate of the orbits versus the value of
the parameter $A$, a typical bifurcation diagram is obtained.

Figure \ref{logistic_henon_bif_MLE_cimm} shows the comparison between
the bifurcation diagram, the MLE and $i_{\textrm{scale}}$ for both the
logistic map (left) and the Henon map (right). As in the case of the
BvP oscillator, the agreement between the MLE and
$i_{\textrm{scale}}$ is clear.

It is also noticeable that the maximum values of the
$i_{\textrm{scale}}$ in these two dynamical systems are reached when
the main branches of their bifurcation diagrams overlap.

\begin{figure*}[htbp]
\centering
\includegraphics[width=0.95\textwidth]{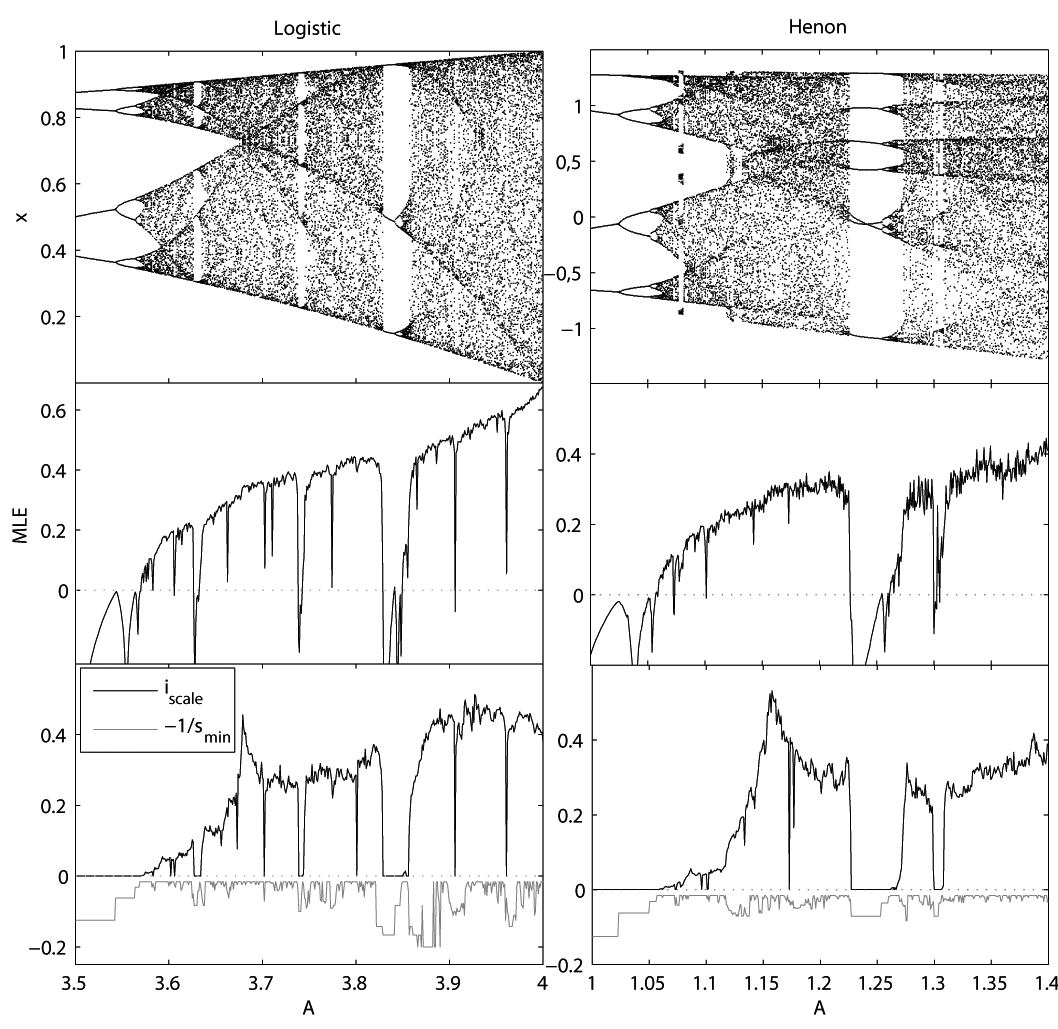}
\caption{Comparison between the bifurcation diagram, MLE and $i_{\textrm{scale}}$ (from top to
bottom) for the logistic map (left) and the Henon map (right).}
\label{logistic_henon_bif_MLE_cimm}
\end{figure*}

\section{Conclusions}

Wavelet analysis has proved to be a valuable tool in the study of
chaotic systems. In particular, scalogram analysis and the introduced
scale index are a good complement to the MLE, since the scale index
gives a measure of the degree of non-periodicity of the signal, while
the MLE gives information about the sensitivity to initial conditions.
Thus, the combination of the two methods gives us a
comprehensive description of a chaotic signal.

Since the MLE and the scale index focus on different characteristics of the
signals, the latter contributes to detecting effects that are not
detected by the MLE, for example, the sudden expansion of the size
of the attractor in the BvP system, and the overlapping of the main
branches of the bifurcation diagram in the logistic and Henon
maps. Moreover, there are regions in the logistic and Henon graphs
where the MLE is increasing but the $i_{\textrm{scale}}$ is
decreasing. This means that while the sensitivity to the initial
conditions is greater, on the other hand the signal is less
non-periodic.

% Therefore, in one sense, the signal is more chaotic and
%in other sense it is less chaotic.

Additionally, the study of the scale index does not require an analytical
expression for the signal. In those cases where an analytical
description of the dynamical system involved is not available
(e.g. experimental signals), although there are methods for estimating the
MLE (see \cite{Zeng91}), the scale index might be a useful
alternative.

These techniques can also be applied to any other discipline where the
analysis of time series is required, such as Earth sciences,
econometry, bio\-me\-di\-cine and any other one where non-linear behavior is
expected to occur. For instance, this opens a branch to the study of
colour noise presented in this kind of time series; it also could be
used to determine properties of seismic and volcanic events
(\cite{Ibanez03}); to detect chaos in non-linear economic time series
(\cite{Barnett92}) as a complement to
other methods already used, like those based on Lyapunov exponents and the power spectral
density; and in biomedicine, where wavelets have already been used to
analyze non-stationary cardiac signals (\cite{Faust04}).

%Nevertheless, there are situations where the estimation of the MLE is
%not an easy target, as it occurs, for instance, when the signal is
%obtained from experimental data and no analytical description of the
%involved dynamical system is available.
%Despite there are methods to estimate the MLE (see \cite{Zeng91}), the
%scale index could constitute a useful alternative.

\appendix
\section{Proof of Theorem \ref{teo:1}}
\label{sec:app}

Given $g:\mathbb{R} \rightarrow \mathbb{C} $ with compact support, its
periodization over $\left[ 0,1\right] $ is defined as
\[
g^{\textrm{per}}(t):=\sum _{k\in \mathbb{Z}} g(t+k),\qquad t\in \mathbb{R}.
\]
It is clear that, if $g_u(t):=g(t-u)$ is a translated version of $g$, then
\begin{equation}
\label{eq:guper}
\left( g_u\right) ^{\textrm{per}}(t)=g^{\textrm{per}}(t-u).
\end{equation}

Moreover, if $h\in L^2\left( \left[ 0,1\right] \right) $, then
\begin{equation}
\label{eq:hperg}
\langle h^{\textrm{per}},g\rangle =
\langle h,g^{\textrm{per}}\rangle _{\left[ 0,1\right] }.
\end{equation}

\begin{lemma}
\label{lema1}
The periodization $\psi ^{\textrm{per}}_{u,s}$ is zero almost
everywhere for $u=0$ and $s=2$.
\end{lemma}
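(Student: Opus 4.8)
The goal is to show that the period--one periodization $\psi^{\textrm{per}}_{0,2}(t)=\sum_{k\in\mathbb{Z}}\psi_{0,2}(t+k)$ vanishes almost everywhere, where by the abuse of notation $\psi_{0,2}(t)=\frac{1}{\sqrt{2}}\psi(t/2)$. The naive hope that the zero--average property $\int_{\mathbb{R}}\psi=0$ suffices is misleading: on its own it only forces the mean $\int_0^1\psi^{\textrm{per}}_{0,2}=0$, i.e. the vanishing of a single Fourier coefficient, whereas we need all of them to vanish. The extra ingredient I would exploit is the multiresolution structure, namely the two--scale (refinement) relation $\psi(x)=\sqrt{2}\sum_{n}g_n\,\phi(2x-n)$ that a compactly supported orthonormal wavelet satisfies together with its scaling function $\phi$ (which is then also compactly supported, so only finitely many filter coefficients $g_n$ are nonzero).

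The plan is a direct computation in physical space. Substituting $x=(t+k)/2$ in the refinement relation gives $\psi\!\left(\frac{t+k}{2}\right)=\sqrt{2}\sum_n g_n\,\phi(t+k-n)$, so that
\begin{equation*}
\psi^{\textrm{per}}_{0,2}(t)=\frac{1}{\sqrt{2}}\sum_{k}\psi\!\left(\frac{t+k}{2}\right)=\sum_{k}\sum_{n}g_n\,\phi(t+k-n).
\end{equation*}
For each fixed $t$ the compact support of $\phi$ and the finiteness of $\{g_n\}$ make this a finite double sum, so I may freely interchange the two summations. Reindexing the inner sum by $l=k-n$ (which runs over all of $\mathbb{Z}$ for every fixed $n$) decouples the sums and produces
\begin{equation*}
\psi^{\textrm{per}}_{0,2}(t)=\left(\sum_{n}g_n\right)\sum_{l}\phi(t+l)=\left(\sum_{n}g_n\right)\phi^{\textrm{per}}(t).
\end{equation*}

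It therefore remains to prove that $\sum_n g_n=0$, and here the zero--average hypothesis finally enters. Integrating the refinement relation, $0=\int_{\mathbb{R}}\psi=\sqrt{2}\sum_n g_n\int_{\mathbb{R}}\phi(2x-n)\,dx=\frac{1}{\sqrt{2}}\left(\sum_n g_n\right)\int_{\mathbb{R}}\phi$, and since a scaling function has nonzero integral ($\int_{\mathbb{R}}\phi\neq0$), this forces $\sum_n g_n=0$. Consequently $\psi^{\textrm{per}}_{0,2}\equiv0$ almost everywhere, as claimed. I expect the only genuinely delicate point to be the justification that the multiresolution machinery is available, that is, reading ``compactly supported wavelet'' as a compactly supported \emph{orthonormal} wavelet so that the refinement relation and a compactly supported $\phi$ are at hand; the summation interchange and the identity $\sum_n g_n=0$ are then routine. (Equivalently one could argue on the Fourier side, where the same mechanism shows that the Fourier coefficients of $\psi^{\textrm{per}}_{0,2}$ equal $\sqrt{2}\,\widehat{\psi}(2n)=\sqrt{2}\,m_1(n)\widehat{\phi}(n)$ and vanish because the wavelet filter $m_1$ is one--periodic with $m_1(0)=0$.)
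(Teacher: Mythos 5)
Your proof is correct, but it follows a genuinely different route from the one in the paper. The paper argues by duality: it invokes Mallat's theorem that $\left\{ \psi^{\textrm{per}}_{j,k}\right\}_{-\infty<j\leq 0,\ 0\leq k<2^{-j}}\cup\{1\}$ is an orthogonal basis of $L^2\left(\left[0,1\right]\right)$ together with the orthogonality of $\psi^{\textrm{per}}_{1,0}$ to that wavelet family, so that for any $h$ the inner product $\langle h,\psi^{\textrm{per}}_{1,0}\rangle_{\left[0,1\right]}$ reduces to a multiple of $\langle 1,\psi_{1,0}\rangle$, which vanishes by the zero-average property; since this holds for all $h$, the periodization is zero a.e. You instead compute $\psi^{\textrm{per}}_{0,2}$ pointwise via the two-scale relation, factor it as $\bigl(\sum_n g_n\bigr)\phi^{\textrm{per}}$, and kill the scalar with $\int\psi=0$ and $\int\phi\neq 0$; the finiteness of the double sum justifies the reindexing, so the argument is sound. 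Your diagnosis that zero average alone cannot suffice is exactly right, and it is mirrored in the paper's proof, where zero average only annihilates the constant component and the basis theorem does the rest. Both arguments ultimately rest on the same structural hypothesis -- that the compactly supported wavelet is an orthonormal MRA wavelet (the paper needs it to quote the periodized-basis theorem; you need it for the refinement equation, and you correctly flag that reading "compactly supported wavelet" this way is the one nontrivial point, justified by the theorem that compactly supported orthonormal wavelets always arise from an MRA with compactly supported scaling function). What your version buys is a more explicit and self-contained computation yielding the stronger pointwise identity $\psi^{\textrm{per}}_{0,2}\equiv 0$; what the paper's version buys is a shorter argument that generalizes immediately to all $j\geq 1$ and all $k$, as remarked at the end of its proof.
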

\begin{proof}
  In this proof we are going to use the dyadic notation for the
  dilated and translated wavelets given by expression
  (\ref{eq:psijk}). So, we have to prove that $\psi
  ^{\textrm{per}}_{j,k}=0$ a.e. for $j=1$ and $k=0$.

  Let $h\in L^2\left( \left[ 0,1\right] \right) $.  Since $\left\{
    \psi ^{\textrm{per}}_{j,k}\right\} _{-\infty <j\leq 0,\, 0\leq
    k<2^{-j}} \cup \left\{ 1\right\} $ is an orthogonal basis of
  $L^2\left( \left[ 0,1\right] \right) $ and the periodization $\psi
  ^{\textrm{per}}_{1,0}$ is orthogonal to the family of functions
  $\left\{ \psi ^{\textrm{per}}_{j,k}\right\} _{-\infty <j\leq 0,\,
    0\leq k<2^{-j}}$ (see \cite[Thm. 7.16, Lem. 7.2]{mallat1999}), we
  have
\[
\langle h,\psi ^{\textrm{per}}_{1,0}\rangle _{\left[ 0,1\right] }
=M\langle 1,\psi ^{\textrm{per}}_{1,0}\rangle _{\left[ 0,1\right] }
=M\langle 1,\psi _{1,0}\rangle ,
\]
where $M=\langle h,1\rangle _{\left[ 0,1\right] }$.  Taking into
account that $\psi $ has zero average, we have that $\langle 1,\psi
_{1,0}\rangle =0$, and so
\[
\langle h,\psi ^{\textrm{per}}_{1,0}\rangle _{\left[ 0,1\right] }=0.
\]

In general, it can be proved that $\psi ^{\textrm{per}}_{j,k}=0$
a.e. for all $j\geq 1$ and for all $k\in \mathbb{Z}$.
\end{proof}

\begin{lemma}
\label{lema2}
The periodization $\psi ^{\textrm{per}}_{u,2}=0$ a.e. for all $u\in \mathbb{R}$.
\end{lemma}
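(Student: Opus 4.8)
The plan is to reduce the general case $u\in \mathbb{R}$ to the case $u=0$ already settled in Lemma \ref{lema1}, using the observation that changing the translation parameter $u$ in the atom $\psi _{u,s}$ amounts to a rigid translation of the function, together with the fact that periodization commutes with translation through identity (\ref{eq:guper}).

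First I would unwind the definitions to relate $\psi _{u,2}$ to $\psi _{0,2}$. From (\ref{eq:psius}),
\[
\psi _{u,2}(t)=\frac{1}{\sqrt{2}}\psi \left( \frac{t-u}{2}\right) =\psi _{0,2}(t-u),
\]
so that $\psi _{u,2}=\left( \psi _{0,2}\right) _u$ in the translation notation $g_u(t):=g(t-u)$ fixed above. This step requires nothing beyond the definitions; note also that $\psi _{0,2}$ is compactly supported because $\psi $ is, so that its periodization is well-defined.

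Next I would apply identity (\ref{eq:guper}) with $g=\psi _{0,2}$, which gives
\[
\psi _{u,2}^{\textrm{per}}(t)=\left( \left( \psi _{0,2}\right) _u\right) ^{\textrm{per}}(t)=\psi _{0,2}^{\textrm{per}}(t-u).
\]
Thus the periodization of the translated atom is simply a translate of the periodization of the untranslated atom.

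Finally, Lemma \ref{lema1} asserts that $\psi _{0,2}^{\textrm{per}}=0$ almost everywhere. Since any translate of a function vanishing almost everywhere again vanishes almost everywhere, we obtain $\psi _{u,2}^{\textrm{per}}(t)=\psi _{0,2}^{\textrm{per}}(t-u)=0$ for almost every $t$ and for every $u\in \mathbb{R}$, as required. There is no genuine obstacle in this argument; the only point demanding care is the bookkeeping that the parameter $u$ of the atom coincides with the translation parameter in $g_u$, which is immediate once the definitions are written out.
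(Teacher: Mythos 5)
Your argument is exactly the paper's proof: identify $\psi_{u,2}$ as the translate $\left(\psi_{0,2}\right)_u$, apply identity (\ref{eq:guper}) to commute periodization with translation, and invoke Lemma \ref{lema1}. The proposal is correct; it merely spells out the bookkeeping that the paper leaves implicit.
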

\begin{proof}
  Since $\psi _{u,2}(t)=\psi _{0,2}(t-u)$ we have that $\psi
  ^{\textrm{per}}_{u,2}(t)=\psi ^{\textrm{per}}_{0,2}(t-u)=0$
  a.e. taking into account (\ref{eq:guper}) and Lemma \ref{lema1}.
\end{proof}

\textbf{Theorem \ref{teo:1}}. \textit{
  Let $f:\mathbb{R}\rightarrow \mathbb{C}$ be a $T$-periodic function
  in $L^2\left( \left[ 0,T\right] \right) $, and let $\psi $ be a
  compactly supported wavelet. Then $Wf\left( u,2T\right) =0$ for all
  $u\in \mathbb{R}$.}

\begin{proof}
  Without loss of generality we may assume that $T=1$. Then, it will
  be proved that $Wf\left( u,2\right) =0$ for all $u\in \mathbb{R}$:

  Let $h:=f|_{\left[ 0,1\right] }$. Then $f=h^{\textrm{per}}$ and we
  get%(see \cite[pp. 283-284]{mallat1999}):
\[
Wf\left( u,2\right)=\langle f,\psi _{u,2}\rangle=\langle
h^{\textrm{per}},\psi _{u,2}\rangle = \langle h,\psi
^{\textrm{per}}_{u,2}\rangle _{\left[ 0,1\right] }=0,
\]
taking into account (\ref{eq:hperg}) and Lemma \ref{lema2}.
\end{proof}

\section*{Acknowledgments}
  We thank Carlos Fern\'andez Garc\'{\i}a, the referees, and the editor of the journal \textit{Computer and Mathematics with Applications} for their
  useful comments and suggestions that improved the paper.

\end{document}